\documentclass[letterpaper, draftclsnofoot, onecolumn]{ieeeconf}

\usepackage{soul}
\usepackage[normalem]{ulem}

\usepackage[hidelinks]{hyperref}
\usepackage{bookmark}
\usepackage[latin1]{inputenc}
\usepackage[english]{babel}

\usepackage[dvipsnames]{xcolor}
\usepackage{amsfonts,amssymb,amsmath,color}
\usepackage[draft]{graphicx}
\usepackage[noadjust]{cite}
\usepackage{placeins}
\usepackage{graphicx}
\usepackage{algorithm}
\usepackage[]{algpseudocode}

\usepackage{tikz}
\usetikzlibrary{shapes,arrows}

\newcommand{\map}[3]{#1: #2 \rightarrow #3}

\newcommand{\diag}{\operatorname{diag}}

\newcommand{\EE}{\mathcal{E}} 
\newcommand{\GG}{\mathcal{G}}

\newcommand{\nbrs}{\mathcal{N}}

\newcommand{\real}{{\mathbb{R}}}

\newcommand{\until}[1]{\{1,\ldots,#1\}}

\newcommand{\toN}[1][N]{1,\dots,#1}

\newcommand\oprocendsymbol{\hbox{$\triangle$}}
\newcommand\oprocend{\relax\ifmmode\else\unskip\hfill\fi\oprocendsymbol}

\newtheorem{theorem}{Theorem}[section]
\newtheorem{proposition}[theorem]{Proposition}

\newtheorem{definition}[theorem]{Definition} \newtheorem{lemma}[theorem]{Lemma}
\newtheorem{remark}[theorem]{Remark} 
 
\newtheorem{problem}[theorem]{Problem}

\newcommand{\bA}{\mathbf{A}}
\newcommand{\btA}{\mathbf{\tA}}

\newcommand{\bI}{\mathbf{I}}
\newcommand{\0}{\mathbf{0}}
\newcommand{\1}{\mathbf{1}}
\newcommand{\ts}{\theta^\star}
\newcommand{\xeq}{x_{\text{eq}}}
\newcommand{\zeq}{z_{\text{eq}}}

\newcommand{\kron}{\otimes}

\makeatletter
\newcommand{\StatexIndent}[1][3]{%
  \setlength\@tempdima{\algorithmicindent}%
  \Statex\hskip\dimexpr#1\@tempdima\relax}
\makeatother

\renewcommand{\lim}{\operatornamewithlimits{lim\vphantom{p}}}

\graphicspath{{figs/}}

\newcommand{\cO}{\mathcal{O}}

\newcommand{\x}{\times}

\newcommand{\R}{\mathbb{R}}
\newcommand{\N}{\mathbb{N}}

\newcommand{\cV}{\mathcal{V}}

\newcommand{\ub}{\mathbf{u}}

\newcommand{\dx}{{d}}
\newcommand{\dz}{{n_z}}

\renewcommand{\st}{\;\mid\;}

\newcommand{\xb}{\mathbf{x}}

\newcommand{\nb}{\mathbf{n}}

\newcommand{\tA}{\tilde{A}}
\newcommand{\ta}{\tilde{a}}

\DeclareMathOperator{\img}{Im}
\DeclareMathOperator{\col}{col}

\author{Michelangelo Bin, Ivano Notarnicola, Lorenzo Marconi, Giuseppe Notarstefano
\thanks{
  The authors are with the Department of Electrical, Electronic and Information Engineering, 
  University of Bologna, Bologna, Italy, \texttt{name.lastname@unibo.it}.
  }
\thanks{
  The research leading to these results has received funding from the European Research Council (ERC) under the European Union's Horizon 2020 research and innovation programme (grant agreement No 638992 - OPT4SMART) 
and from the European Project AirBorne (ICT 780960).}
}

\title{\bf A System Theoretical Perspective to Gradient-Tracking
  Algorithms\\ for Distributed Quadratic Optimization} %

\def \algname/{Gradient Tracking}
\def \conditionedinitializationset/{admissible initialization set}

\begin{document}
\maketitle

\noindent
{\bf \textcopyright 2019 IEEE. Personal use of this material is permitted. Permission from IEEE must be obtained for all other uses, in any current or future media, including reprinting/republishing this material for advertising or promotional purposes, creating new collective works, for resale or redistribution to servers or lists, 
or reuse of any copyrighted component of this work in other works.\\[1em]}

\begin{abstract}
  In this paper we consider a recently developed distributed optimization
  algorithm based on gradient tracking. We propose a system theory framework
  to analyze its structural properties on a preliminary, quadratic optimization
  set-up.  Specifically, we focus on a scenario in which agents in a static
  network want to cooperatively minimize the sum of quadratic cost functions.
  We show that the gradient tracking distributed algorithm for the investigated
  program can be viewed as a sparse closed-loop linear system in which
  the dynamic state-feedback controller includes consensus matrices and optimization
  (stepsize) parameters. The closed-loop system turns out to be not completely
  reachable and asymptotic stability can be shown restricted to a proper
  invariant set. 
  Convergence to the global minimum, in turn, can be obtained only by 
  means of a proper initialization. The proposed system interpretation 
  of the distributed algorithm provides also additional insights on other 
  structural properties and possible design choices that are discussed in 
  the last part of the paper as a starting point for future developments. 
\end{abstract}

\section{Introduction}
\label{sec:intro}
Many optimization algorithms are iterative procedures that can be, thus, framed
as discrete-time dynamical systems.
Usual approaches to prove the convergence of these schemes, even though often
based on descent, Lyapunov-like arguments, do not explicitly and deeply explore
this system theoretical perspective.
The great potential of system theory becomes more evident when noticing that
several algorithms encode a feedback structure in their update laws.
In this paper we propose a system theoretical interpretation of a
state-of-the-art \emph{distributed} optimization algorithm often known as
gradient tracking, see, e.g.,~\cite{dilorenzo2016next,varagnolo2016newton,nedic2017achieving, 
qu2018harnessing,xu2018convergence,xi2018addopt,xin2018linear,scutari2019distributed}.

In this framework agents (systems) in a network cooperate to
minimize the sum of local functions that depend on a common decision
variable. Agents exchange information with neighbors in a given (sparse)
communication graph and cannot rely on any centralized coordinating unit.
We consider a simplified set-up in which the optimization problem is quadratic
and the communication occurs according to a fixed and undirected graph.

Distributed optimization has received a large interest from the control community in the
last decades. Early references on this topic are~\cite{nedic2009distributed,nedic2010constrained} 
where the (sub)gradient method has been successfully combined with consensus averaging
to design a distributed method.
Recently, this approach has been enhanced by introducing a tracking technique 
based on the dynamic average consensus, originally proposed 
in~\cite{zhu2010discrete,kia2018tutorial}. The tracking mechanism
allows agents to obtain a local estimate of the gradient of the entire sum 
of functions, which is then used as a descent direction in the consensus-based
update of the local solution estimate, see, 
e.g.,~\cite{dilorenzo2016next,varagnolo2016newton,nedic2017achieving, 
qu2018harnessing,xu2018convergence,xi2018addopt,xin2018linear,scutari2019distributed}.

First approaches providing a system theoretical perspective to distributed
optimization algorithms are~\cite{wang2010control,wang2011control}.
A framework based on integral quadratic constraints from robust control theory
is proposed in~\cite{lessard2016analysis} to analyze and design (centralized)
iterative optimization algorithms.
In~\cite{hu2017control} authors propose a loop-shaping interpretations for
several existing optimization methods based on basic control elements such as
PID and lag compensators.
The convergence of distributed optimization algorithms by means of proper
semidefinite programs is, instead, discussed in~\cite{sundararajan2017robust}.
A passivity-based approach is proposed in~\cite{hatanaka2018passivity} to
analyze a distributed algorithm with communication delays.

The contributions of this paper are as follows. 
We approach the design of distributed optimization algorithms as a
control problem, by showing how system theoretical tools can be used to
provide new insights on the existing algorithms, and new perspectives for future
extensions. We develop the discussion for a simplified quadratic, unconstrained
optimization problem, that allows us to rely on powerful tools from linear 
regulation theory.
In particular, we cast the optimization algorithm design as a linear control problem 
aiming to steer the state trajectories toward the optimal solution, and we provide 
necessary and sufficient conditions for its solvability.
We show that a class of gradient tracking distributed algorithms fits in the proposed 
framework, which, in turn, provides new insights in terms of structural properties of 
the controlled system.
Specifically, the resulting algorithm, seen as a sparse dynamical system, turns out to be 
not completely reachable and this reflects on the need of a proper initialization and of
the necessity of a stabilizing action in the ``closed-loop'' dynamics. 
The proposed system theoretical perspective suggests that robustness arguments,
customary in control theory, can be used to extends these features also to optimization 
algorithms.%

The paper is organized as follows. In Section~\ref{sec:set-up} we introduce the
distributed optimization set-up and recall the gradient tracking algorithm. In
Section~\ref{sec:control_approach} we describe the system theoretical framework to
solve a quadratic distributed optimization problem which is used in
Section~\ref{sec:GT_revisited} to analyze the gradient tracking algorithm.

\paragraph*{Notation}
We deal with discrete-time dynamical systems of the form $x(t+1) = \phi
(x(t))$. For the sake of readability we omit the time dependency whenever it is
clear from the context and we write $x^+$ in place of $x(t+1)$.
Given a square matrix $M\in\real^{\dx\x\dx}$, we denote by $\sigma(M)$ its
spectrum.
A square matrix is said to be Schur if all its eigenvalues 
lie inside the open unitary disc.
Given a square matrix $F\in\real^{\dx\x\dx}$, a set $\cV \subset \R^\dx$ is 
said to be $F$-invariant if for all $v\in\cV$ it holds $Fv \in \cV$.
We denote by $I_\dx$ the $d\x d$ identity matrix and by $0_\dx$ the $d\x d$
matrix of zeros.  The column vector of $\dx$ ones is denoted by
$\1_\dx$. Moreover, we define $\bI = \1_N \kron I_d$ where $\kron$ is the
Kronecker product.  We omit the dimension of these objects whenever it is clear from
the context. For $x\in\R^{n_1}$ and $z\in\R^{n_2}$, we denote by
$\col(x,z)\in\R^{n_1+n_2}$ their column concatenation. For $S\subset\R^n$ and
$x\in\R^n$, we let $x+S:=\{ z\in\R^n\st z=x+s, \,s\in S \}$.

\section{The Distributed Optimization Framework}
\label{sec:set-up}
In this section we introduce the distributed optimization set-up 
and recall the state-of-the-art gradient tracking algorithm 
that we aim to investigate in this paper.

\subsection{Distributed Optimization Set-up}
We consider the following optimization problem
\begin{align}\label{eq:problem}
  \min_{\theta \in \R^\dx} \: & \: \sum_{i=1}^N f_i ( \theta ),
\end{align}
where, for each $i = \toN$,   $\map{f_i}{\R^\dx}{\R}$ is of the form
\begin{equation}\label{d:fi}
  f_i( \theta) = \dfrac{1}{2} (\theta - \Gamma_i \theta_0 )^\top C_i (\theta - \Gamma_i \theta_0),
\end{equation}
with $C_i\in\R^{\dx \x \dx}$  symmetric and positive-definite, 
$\Gamma_i \in \real^{\dx \times p}$, and where $\theta_0 \in \real^p$ is an offset parameter 
whose role will be  clarified later. In particular, 
problem~\eqref{eq:problem} admits a unique solution given by
\begin{align}\label{d:Sigma}
  \begin{aligned}
  \ts & :=  \Sigma  \theta_0,  & \Sigma&:= \left ( \sum_{i=1}^N C_i \right )^{-1} \sum_{i=1}^N C_i \Gamma_i.
  \end{aligned}
\end{align} 
We focus on iterative procedures to solve~\eqref{eq:problem} that are
\textit{distributed}.  In particular, we assume to have a network of $N$ agents,
each one having access only to partial information about the problem and
exchanging information with a subset of the other agents. Distributed
optimization algorithms are \emph{local} update laws that fulfill the network
constraints and allow agents to eventually converge the optimal solution $\ts$.
Formally, we model the network by means of a connected and undirected graph $\GG = (\until{N}, \EE)$
where $\EE\subseteq \until{N} \times \until{N}$ is the set of edges. If $(i,j) \in \EE$, 
then nodes $i$ and $j$ can exchange information (and, in fact,  $(j,i)\in\EE$). 
We denote by $\nbrs_i := \left\{j \in \until{N} \mid (i,j) \in \EE \right\}$ the set of neighbors 
of node $i$ in $\GG$. We assume that $\nbrs_i$ contains $i$ itself.
As usually done in consensus-based approaches, we consider a matrix
$A \in \real^{N\x N}$ matching the graph $\GG$, i.e., $(i,j)$-th entry
$a_{ij} > 0$ for $(i,j) \in \EE$ while $a_{ij} = 0$ otherwise.
Moreover, $A$ is row stochastic if $A \1_N= \1_N$, while it is column 
stochastic if  $\1_N^\top A = \1_N^\top$.
It can be proved that the spectrum of a row (or column) stochastic matrix lies
in the closed unitary circle and the largest (in norm) eigenvalue is $1$ and is
simple.

In this paper we assume that each agent $i$ maintains a local quantity
$x_i\in\R^\dx$ representing its guess of the optimal solution $\ts$, and it has
only access to gradients of the \emph{local cost function} computed at $x_i$,
i.e., to the quantity
\begin{equation}\label{d:yi}
\begin{split}
  y_i  &= \nabla f_i(x_i) = C_i x_i + Q_i\theta_0 ,
\end{split}
\end{equation}
where $Q_i := -C_i \Gamma_i$.
In these terms, the distributed optimization problem associated to~\eqref{eq:problem} can be cast as follows.
\begin{problem}\label{prob:distributedOpt}
  Find an update law for $x_i$, depending only on the local available information 
  given by the quantities $(x_j,y_j)$ for all $j\in\nbrs_i$, such that
  \begin{equation*}
    \lim_{t\to \infty} x_i(t) = \ts = \Sigma \theta_0,
  \end{equation*}
  for each $i\in\until{N}$.\oprocend
\end{problem}

Problem~\ref{prob:distributedOpt} could be clearly solved in a distributed way
through a consensus algorithm by exploiting equation~\eqref{d:Sigma}. In this
paper we focus on distributed optimization algorithms to solve Problem~\ref{prob:distributedOpt}.
It is worth mentioning that in some applications agents may not know 
$C_i$, $Q_i$ and $\theta_0$ but just the local \emph{measurement} $y_i$.
Notice that, in view of~\eqref{d:yi}, each matrix $C_i$ is directly linked 
to the Lipschitz constant of the corresponding local gradient $\nabla f_i$, 
while the affine terms $Q_i\theta_0$ represent a
partial information on $\theta_0$ that, even if accessible via $y_i$, is not
assumed to be known a priori. On this regard, $\theta_0$ is a parameter
condensing an information which is not known to the agents.

\subsection{The Gradient Tracking Algorithm}
\label{sec:gradient_tracking}
In this subsection, we recall the \emph{gradient tracking algorithm} in its most
basic form. For convenience, we first recall the (centralized) gradient method
applied to a generic instance of~\eqref{eq:problem}.
In the (steepest descent) gradient method, a solution estimate $\theta(t)$ is
iteratively updated according to\footnote{As discussed in the Notation
  paragraph, we omit the time dependence when not strictly necessary.}
\begin{align*}
  \theta^+ = \theta - \gamma \sum_{i=1}^N \nabla f_i (\theta),
\end{align*}
where $\gamma$ is a constant, positive parameter that is usually called \emph{stepsize}.
Convergence results for the class of gradient methods can be found, e.g., 
in~\cite{bertsekas1999nonlinear}.

The gradient tracking distributed algorithm mimics the centralized update by exploiting
a twofold consensus-based mechanism to: \emph{(i)} enforce an agreement among the agents' 
estimates $x_i$ and \emph{(ii) }dynamically track the gradient of the whole cost function
through an auxiliary variable $s_i \in \real^\dx$, called \emph{tracker}.
Formally, it reads
\begin{subequations} \label{eq:gradient_tracking}
\begin{align}
  \label{eq:gradient_tracking_x}
  x_i^+& = \!\! \sum_{j \in\nbrs_i} a_{ij} x_j - \gamma s_i
  \\
  \label{eq:gradient_tracking_s}
  s_i^+ & = \sum_{j \in\nbrs_i} \ta_{ij} s_j + \nabla f_i (x_i^+) - \nabla f_i (x_i),
\end{align}
\end{subequations}
where $a_{ij}$ and $\ta_{ij}$ are entries of a row stochastic 
matrix $A \in\real^{N\x N}$ and of a column stochastic matrix $\tA\in\real^{N\x N}$, 
respectively, while $\gamma >0$ is a (constant) stepsize.

Several versions of the gradient tracking algorithm have been analyzed for
generic, nonlinear, and possibly constrained versions of
problem~\eqref{eq:problem}, see, e.g.,
\cite{dilorenzo2016next,varagnolo2016newton,nedic2017achieving, 
qu2018harnessing,xu2018convergence,xi2018addopt,xin2018linear,scutari2019distributed}. 
For example, in~\cite{xin2018linear}
it is shown that, under strong convexity of the local cost functions $f_i$, and
Lipschitz continuity of their gradients, the sequence
$\{(x_1(t), \ldots, x_N(t))\}_{t\ge0}$ generated by
algorithm~\eqref{eq:gradient_tracking}, with $x_i(0)$ arbitrary,
$s_i(0) = \nabla f_i(x_i(0))$ and for a sufficiently small stepsize $\gamma$,
converges to the optimal solution $\ts$ of~\eqref{eq:problem}.

\begin{remark}\label{rmk:gt_init}
  An interesting property of the states $s_i$ is that, by summing over $i$ the
  update~\eqref{eq:gradient_tracking_s}, we can exploit the column stochasticity
  of the weights $\ta_{ij}$ to obtain
\begin{align} \label{eq:conservation}
  \sum_{i=1}^N s_i^+ - \sum_{i=1}^N \nabla f_i (x_i^+)
  =
  \sum_{i=1}^N s_i - \sum_{i=1}^N \nabla f_i (x_i).
\end{align}
Specifically, condition~\eqref{eq:conservation} holds at $t=0$. Moreover, it
also holds for any \emph{consensual} asymptotic value of $(x_i,s_i)$.  By
assuming $\lim_{t\to\infty} x_i(t) = x_i^{\infty} = x^{\infty}$, for all
$i= \toN$, it can be shown that the asymptotic value of the tracker is
$\lim_{t\to\infty} s_i(t) = s_i^{\infty} = 0$, $\forall \, i$ (recall that
weights $a_{ij}$ in~\eqref{eq:gradient_tracking_x} sum up to one, i.e., $A$ is
row stochastic). Thus, we have
\begin{align*}
  \sum_{i=1}^N s_i^{\infty} -
  \sum_{i=1}^N \nabla f_i ( x_i^{\infty})
  & =
  -
  \sum_{i=1}^N \nabla f_i ( x^{\infty})
  \\
  & =
  \sum_{i=1}^N s_i(0) -
  \sum_{i=1}^N \nabla f_i (x_i(0)).
\end{align*}
This, in turn, shows that if the initialization of each $s_i$ is arbitrary, so
that the last line is not zero, there is no chance that a consensual asymptotic
value is stationary (hence optimal) for problem~\eqref{eq:problem}.\oprocend
\end{remark}

The distributed algorithm described by~\eqref{eq:gradient_tracking} does not
enjoy the usual ``state-space'' structure of dynamical systems, since the
updated $s_i^+$ depends on $x_i^+$. Thus, we consider the change of variable
$z_i := s_i - \nabla f_i (x_i)$, so that algorithm~\eqref{eq:gradient_tracking}
can be equivalently rewritten as
\begin{subequations}
\label{eq:gradient_tracking_xz_distribtued}
\begin{align}
  x_i^+ & = \sum_{j \in\nbrs_i} a_{ij} x_j - \gamma \left ( z_i + \nabla f_i(x_i) \right )
  \\
  z_i^+ & = \sum_{j \in\nbrs_i} \ta_{ij} z_j + \sum_{j \in\nbrs_i} \ta_{ij} \nabla f_j (x_j) - \nabla f_i (x_i).
\end{align}
\end{subequations}
In these new coordinates, the correct initialization becomes
$z_i(0)  0$, for all $i=\toN$.  Also, this reformulation does not alter the
distributed nature of the algorithm.

Let $x := (x_1, \ldots, x_N)$, 
$z := (z_1, \ldots, z_N)$ and $y := (y_1, \ldots, y_N)$ 
and compactly rewrite~\eqref{eq:gradient_tracking_xz_distribtued} as
\begin{equation}
\label{eq:gradient_tracking_xz}
\begin{array}{lcl}
  x^+ & = & \bA x - \gamma z  - \gamma  y
  \\
  z^+ & = & \btA z + (\btA  -I_{dN}) y
  \\
  y & = & \nabla F(x),
\end{array}
\end{equation}
in which $\bA = A \kron I_d$, $\btA = \tA \kron I_d$, $z(0)=0_{d N}$, and $\nabla F(x)$ denotes a column
vector stacking the local gradients, i.e., $\nabla F(x) = (\nabla f_1(x_1), \ldots, \nabla f_N(x_N))$.
Notice that, for the considered quadratic scenario the output map $y$ is (as expected) affine 
in $x$ and this structure will be exploited later. %

\section{A System Theoretical Approach to Quadratic Distributed Optimization}
\label{sec:control_approach}

In this section, we approach the design of a distributed algorithm solving optimization 
problem~\eqref{eq:problem} from a system theoretical perspective. 
Specifically, we approach Problem~\ref{prob:distributedOpt} as a generic \emph{set-point control problem}, 
by giving necessary and sufficient conditions for its solvability. 
For the sake of presentation, in this part we intentionally avoid dealing explicitly 
with network constraints. We will   discuss constructive choices of the different 
degrees of freedom that are consistent with the network constraints in Section~\ref{sec:GT_revisited}.

\subsection{The Underlying Control Problem}
The local estimates $x_i$ can be seen as the \emph{state} of $N$ controlled
plants
\begin{equation}\label{ctr:s:xi}
  x^+_i = u_i, \hspace{0.5cm} i = \toN.
\end{equation}
The control goal consists of finding
a suitable control input $u = (u_1,\dots,u_N)$ such that each
controlled plant asymptotically converges to the optimal solution $\ts$ of
problem~\eqref{eq:problem}.
We further point out that in this regulation setting the target equilibrium $\ts$ is not
available for feedback.

By letting $x:=(x_1,\dots,x_N)$
and $y:=(y_1,\dots,y_N)$, the overall controlled plant, obtained by stacking the local dynamics~\eqref{ctr:s:xi}
and the local measurements~\eqref{d:yi}, reads as
\begin{equation}\label{ctr:s:x}
\begin{array}{lcl} 
x^+ &=& u
\\
y&=& C  x + Q  \theta_0,
\end{array}
\end{equation}
where $C:=\diag(C_1,\dots,C_N)$ and $Q:=\col(Q_1,\dots,Q_N)$, with 
$C_i$ and $Q_i$ introduced in~\eqref{d:yi}.
Therefore, Problem~\ref{prob:distributedOpt} can be recast as follows.
\begin{problem}\label{prob:ctr}
  Find a (dynamic) controller of the form
\begin{equation}\label{ctr:s:z}
\begin{array}{lcl}
  z^+ &=& \Phi z + B_x x + B_y y
  \\
  u &=& K_z z + K_x x + K_y y,
\end{array}
\end{equation}
	with state $z\in\R^{N\dz}$, $\dz\in\N$,  
	and a non-empty set of initial conditions $\cO\subset\R^{N\dx}\x\R^{N\dz}$
	such that, for each $\theta_0\in\R^p$, all the trajectories of the ``closed-loop system'' 
	\begin{equation}\label{s:cl_0}
	\begin{array}{lcl}
	x^+ &=& K_x x + K_z z + K_y y
	\\
	z^+ &=& \Phi z + B_x x + B_y y,
	\end{array}
	\end{equation}
	with $(x(0), z(0)) \in \cO$ are bounded and satisfy 
  \begin{equation*}
    \lim_{t\to \infty} x_i(t) = \ts = \Sigma \theta_0,
  \end{equation*}
  for each $i \in\until{N}$.\oprocend
\end{problem}

Restricting the regulator \eqref{ctr:s:z} to be linear is motivated by the fact
that, except for the affine term $Q \theta_0$ appearing in the output $y$, the
controlled system~\eqref{ctr:s:x} is \emph{linear}. Thus, Problem~\ref{prob:ctr}
results in a \emph{linear} set-point control problem that can be solved by a linear regulator.
In the same way, linearity implies that we can assume, without loss of generality, that the set of
initial conditions $\cO$ is an \emph{affine} subspace of $\R^{N(\dx+\dz)}$ whose
bias is parametrized by $\theta_0$, i.e., 
\begin{equation}\label{d:cO_V} 
\cO  :=   P\theta_0 + \cV     ,
\end{equation}
for some linear subspace $\cV\!$ of $\R^{N(\dx+\dz)}\!$ of dimension $n_v \!\in\! \N$, 
and for some matrix $P \! \in \! \R^{N(\dx+\dz) \x p}$ satisfying $\img P\subset \cV^\perp$. 

The closed-loop system \eqref{s:cl_0} can be compactly written as
\begin{equation}\label{ctr:s:cl}
\begin{bmatrix}
x\\z
\end{bmatrix}^+
=
F
\begin{bmatrix}
x\\z
\end{bmatrix}
+ G \theta_0,
\end{equation}
with 
\begin{equation*}
\begin{aligned}
F  &:=\begin{bmatrix}
K_x+K_y C & K_z\\ B_x+B_y C & \Phi 
\end{bmatrix}, 
& G &:= \begin{bmatrix}
K_y    \\ B_y    
\end{bmatrix}Q.
\end{aligned}
\end{equation*}
The gradient tracking algorithm~\eqref{eq:gradient_tracking_xz} exhibits the 
same closed-loop structure as~\eqref{ctr:s:cl}, in which the gradients act as an 
output feedback action.
In the following, we provide necessary and sufficient conditions for the
existence of a controller of the form \eqref{ctr:s:z} and a set $\cO$ of the
form \eqref{d:cO_V} solving Problem \ref{prob:ctr}.

\subsection{Necessary and Sufficient Conditions}
\label{sec:control_conditions}

Let $\nb:=N(\dx+\dz)$ and $n_v\le \nb$. Consider an $n_v$-dimensional vector 
subspace $\cV$ of $\R^\nb$
and let $T \in\R^{\nb\x \nb}$ be an orthonormal matrix of the form $T=\begin{bmatrix}
T_1 & T_2
\end{bmatrix}$, with   $T_1\in\R^{\nb \x n_v}$ and $T_2\in\R^{\nb \x (\nb-n_v)}$ 
satisfying
\begin{equation}\label{ctr:d:T1_T2}
  \begin{aligned}
    \img T_1 &= \cV, & \img T_2 &= \cV^\perp.
  \end{aligned}
\end{equation}
Then, it is easy to see that $\cV$ is $F$-invariant
if and only if %
\begin{equation*}
  F' := T^\top F T  
  = 
  \begin{bmatrix}
    F'_{I} & F'_{J}
    \\ 
    0 & F'_{E}
  \end{bmatrix},
\end{equation*}
for some $F_{I}'\in\R^{n_v\x n_v}$, $F_{J}'\in\R^{n_v\x (\nb-n_v)}$ and $F_{E}'\in\R^{(\nb-n_v)\x(\nb-n_v)}$. 
The matrices $F_I'$ and $F'_E$ represent the restriction of $F$ to $\cV$ and $\cV^\perp$, respectively. 
These matrices yield to the following definition.
\begin{definition}
  The subspace $\cV$ is said to be: 
  \begin{itemize}
  \item \emph{internally stable} if $F_{I}'$ is Schur; 
  \item \emph{externally anti-stable} if $F'_E$ has no eigenvalue 
  inside the open unitary disc.\oprocend
\end{itemize}
\end{definition}

The forthcoming proposition is the main result of the section and it states necessary and 
sufficient conditions for the existence of a controller of the form \eqref{ctr:s:z} and an initialization set of the form \eqref{d:cO_V}
solving Problem \ref{prob:ctr}.
For simplicity, although not necessary, we restrict the focus to initialization sets with 
the following additional property.
\begin{definition}\label{def:conditioned}
	A set $\cO$ of the form~\eqref{d:cO_V} is said to be an \conditionedinitializationset/
	if $\cV$ is $F$-invariant and externally anti-stable.\oprocend
\end{definition}
\begin{proposition} \label{prop:ctr}
  Consider a controller of the form~\eqref{ctr:s:z} resulting in the closed-loop system~\eqref{ctr:s:cl}.
  Let $\cO$ be an \conditionedinitializationset/ of the form~\eqref{d:cO_V}, for
  some $n_v$-dimensional $F$-invariant subspace $\cV$ of $\R^{\nb}$ and 
  some $P\in\R^{\nb \x p}$ satisfying $\img P\subset\cV^\perp$. Moreover, let $T_2 \in\R^{\nb \x (\nb-n_v)}$ 
  be an orthonormal matrix satisfying $\img T_2 = \cV^\perp$.
	Then, Problem~\ref{prob:ctr} is solved from $\cO$ by a controller of the 
	form~\eqref{ctr:s:z} if and only if
	\begin{enumerate}
		\item the set $\cV$ is internally stable;
		\item there exists  $\Pi\in\R^{\nb\x p}$ satisfying
		\begin{subequations}\label{e:regeq}
			\begin{align} 
			\Pi &=  F\Pi + G \label{regeq:Pi} \\[.2em]
			\bI \Sigma &=  \begin{bmatrix}
			I_{N\dx} & 0_{N\dx \x N\dz} 
			\end{bmatrix} \Pi \label{regeq:Psi}\\[.2em]
			0  &=    T_2^\top (\Pi-P).\label{regeq:T2} 
			\end{align}
		\end{subequations}
		\oprocend

	\end{enumerate}
\end{proposition}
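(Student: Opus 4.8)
The plan is to read Problem~\ref{prob:ctr} as a linear set-point regulation problem and to solve it by an error-coordinate argument built on the block-triangular form induced by $T=\begin{bmatrix}T_1 & T_2\end{bmatrix}$. Writing $\xi:=\col(x,z)$, the closed loop is $\xi^+=F\xi+G\theta_0$. I would introduce the candidate steady-state map $\Pi$ of~\eqref{regeq:Pi} and the error $\tilde\xi:=\xi-\Pi\theta_0$; using $\Pi=F\Pi+G$ a direct computation gives the autonomous error dynamics $\tilde\xi^+=F\tilde\xi$, decoupling the tracking task from the constant parameter $\theta_0$. It is also convenient to record that in the coordinates $\eta=T^\top\xi=\col(\eta_1,\eta_2)$, $F$-invariance of $\cV$ makes the $\cV^\perp$-component evolve on its own, $\eta_2^+=F'_E\eta_2+T_2^\top G\,\theta_0$, and that every initial condition in $\cO$ shares the same $\eta_2(0)=T_2^\top P\theta_0$ (because $T_2^\top v=0$ for $v\in\cV$).

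For \emph{sufficiency}, assume conditions \emph{(i)}--\emph{(ii)}. For $\xi(0)=P\theta_0+v\in\cO$ with $v\in\cV$, the initial error is $\tilde\xi(0)=v-(\Pi-P)\theta_0$. Equation~\eqref{regeq:T2} says $(\Pi-P)\theta_0\perp\img T_2=\cV^\perp$, hence $(\Pi-P)\theta_0\in\cV$ and so $\tilde\xi(0)\in\cV$. Since $\cV$ is $F$-invariant, $\tilde\xi(t)\in\cV$ for all $t$, where $F$ acts through $F'_I$; internal stability ($F'_I$ Schur) then gives $\tilde\xi(t)\to0$. Therefore every trajectory is bounded and $\xi(t)\to\Pi\theta_0$, so by~\eqref{regeq:Psi}, $x(t)\to\begin{bmatrix}I_{N\dx} & 0\end{bmatrix}\Pi\theta_0=\bI\Sigma\theta_0=\1_N\kron\ts$, i.e. $x_i(t)\to\ts$ for every $i$. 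Note this direction uses only $F$-invariance, internal stability and the regulator equations; external anti-stability is not needed here.

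For \emph{necessity}, suppose a controller solves Problem~\ref{prob:ctr} from the admissible set $\cO$. First, because $\cV$ is externally anti-stable, boundedness of $\eta_2(\cdot)$ for every $\theta_0$ forces $\img(T_2^\top G)\subseteq\img(I-F'_E)$ and pins $\eta_2(\cdot)$ to the bounded steady-state solution of its subsystem; this yields solvability of the $\cV^\perp$-block $(I-F'_E)\,T_2^\top\Pi=T_2^\top G$ and identifies the steady value with $T_2^\top P\theta_0$, which is exactly~\eqref{regeq:T2}. Next, comparing two trajectories sharing the same $\theta_0$, their difference $\delta(t)$ lies in $\cV$ and obeys $\delta^+=F\delta$ with $\delta(0)$ arbitrary in $\cV$; boundedness plus the fact that both have the same first-block limit $\1_N\kron\ts$ forces every such homogeneous solution to vanish, giving internal stability~\emph{(i)}. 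With $F'_I$ Schur, $(I-F'_I)$ is invertible, so the remaining ($\cV$-)block of~\eqref{regeq:Pi} is uniquely solvable and $\Pi$ exists; finally $\xi(t)\to\Pi\theta_0$ together with $x(t)\to\1_N\kron\ts$ delivers~\eqref{regeq:Psi}.

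The main obstacle is the necessity direction, and specifically the handling of the externally anti-stable dynamics on $\cV^\perp$. Eigenvalues of $F'_E$ strictly outside the unit disc make any deviation from the steady state diverge, so boundedness immediately removes them and yields~\eqref{regeq:T2}; the delicate case is eigenvalues \emph{on} the unit circle, where boundedness alone does not force $\eta_2(\cdot)$ to its equilibrium, and one must additionally exploit that the prescribed limit of $x(t)$ is common to all admissible initial data in order to exclude persistent oscillations that would survive in the first block. The twin subtlety is deducing $F'_I$ Schur from boundedness-plus-convergence, which amounts to ruling out marginal modes of $F'_I$ that are invisible in the $x$-component. I expect these mode-location and visibility arguments, rather than the algebra of the regulator equations, to be the technical heart of the proof.
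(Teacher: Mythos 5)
Your \emph{sufficiency} argument is correct, and it is essentially the reasoning the paper itself has in mind; note, in fact, that the paper never writes out a proof of Proposition~\ref{prop:ctr} at all --- it only follows the statement with an interpretive paragraph (namely, \eqref{regeq:Pi} gives an equilibrium $\Pi\theta_0$, \eqref{regeq:Psi} makes its $x$-block equal to $\bI\ts$, and \eqref{regeq:T2} together with internal stability makes it attract all trajectories from $\cO$). Your error-coordinate computation makes that reading rigorous: $\tilde\xi^+=F\tilde\xi$, condition \eqref{regeq:T2} places $\tilde\xi(0)$ in $\cV$, $F$-invariance keeps the error in $\cV$, the Schur property of $F'_I$ drives it to zero, and \eqref{regeq:Psi} identifies the limit. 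Your observation that external anti-stability plays no role in this half is also correct.

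The \emph{necessity} half, however, contains a genuine gap, located exactly at the step you call the technical heart. The claim that ``boundedness plus the fact that both trajectories have the same first-block limit forces every homogeneous solution $\delta^+=F\delta$, $\delta(0)\in\cV$, to vanish'' is false: that comparison argument only yields $\begin{bmatrix}I_{N\dx} & 0\end{bmatrix}F^t\delta(0)\to 0$, i.e., any non-decaying mode of $F'_I$ must be \emph{invisible in the $x$-block}; it does not exclude such modes. Concretely, take any controller satisfying (i)--(ii) on a subspace $\cV_0$ and augment it with one decoupled scalar controller state obeying $z_{\mathrm{e}}^+=z_{\mathrm{e}}$ (zero couplings through $K_z$, $B_x$, $B_y$). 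With $\cV:=\cV_0\oplus\operatorname{span}\{e\}$, where $e$ is the coordinate direction of $z_{\mathrm{e}}$, the set $\cO=P\theta_0+\cV$ is still an \conditionedinitializationset/ ($\cV$ is $F$-invariant and $F'_E$ is unchanged), every trajectory from $\cO$ is still bounded with $x_i(t)\to\ts$, yet $F'_I$ now has an eigenvalue at $1$, so internal stability fails. Hence the implication you are trying to prove does not hold as stated --- no refinement of the difference-of-trajectories argument can recover it; one would need to strengthen Problem~\ref{prob:ctr} (e.g., require convergence of the full state $(x,z)$) or impose a minimality/observability hypothesis on $\cV$. The same invisibility obstruction defeats your plan for unit-circle eigenvalues of $F'_E$: persistent oscillations in the $\cV^\perp$-block need not show up in the $x$-block, so the common limit of $x$ cannot be used to rule them out, and \eqref{regeq:T2} does not follow from boundedness alone.
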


Regarding the claim of Proposition~\ref{prop:ctr}, 
we observe that equation~\eqref{regeq:Pi} expresses the existence, 
for every $\theta_0\in\R^p$, of an equilibrium of the closed-loop system~\eqref{ctr:s:cl} 
given by
\begin{equation}\label{ctr:xzstar}
  (\xeq,\zeq) := \Pi\theta_0 .
\end{equation}
Equation \eqref{regeq:Psi}, instead, forces such equilibrium to be an optimal solution 
of problem~\eqref{eq:problem}, namely
  $\xeq= \bI \Sigma \theta_0 = \bI \ts$.
Finally, equation \eqref{regeq:T2} and the internal stability of $\cV$  express the fact that, if the closed-loop system \eqref{ctr:s:cl} is initialized with $(x(0),z(0)) \in \cO$,
then the equilibrium point~\eqref{ctr:xzstar} attracts all the closed-loop trajectories.

\section{Gradient Tracking Revisited}
\label{sec:GT_revisited}  
In this section, we establish a bridge between the gradient tracking distributed 
algorithm described in Section~\ref{sec:gradient_tracking} and the system theoretical framework 
discussed in Section~\ref{sec:control_approach}.
The design of a distributed optimization algorithm solving
	problem~\eqref{eq:problem} can be equivalently recast as the problem of finding
	a regulator of the form \eqref{ctr:s:z} which satisfies Problem~\ref{prob:ctr} 
	and is \emph{sparse}, in the sense that each control input $u_i$ depends
	\emph{only on the neighboring information} $(x_j,y_j)$, $j\in\nbrs_i$.
Specifically, we show that matrices $\Phi$, $B_x$, $B_y$, $K_z$, $K_x$, $K_y$ in
the controller~\eqref{ctr:s:z} 
can be properly chosen to implement a class of gradient tracking algorithms that, among others,
includes~\eqref{eq:gradient_tracking_xz}.
To this end, we progressively fix the available degrees 
of freedom in the controller~\eqref{ctr:s:z} with the aim of satisfying the conditions given in 
Proposition~\ref{prop:ctr}.

\subsection{Gradient Tracking as a Control System}

First we set the controller dimension equal to the plant dimension, 
i.e., $\dz = \dx$. Moreover, we let 
in the controller~\eqref{ctr:s:z}
\begin{align}\label{gt:choices}
  \Phi &= \btA, & B_x &=0, & B_y&= \btA-I, & K_x &:=\bA,
\end{align}
where $\bA \in\R^{N\dx\x N\dx}$ satisfies $\bA\bI = \bI$
and $\btA\in\R^{N\dx \x N\dx}$ $\bI^\top \btA= \bI^\top$, 
while $K_z$ and $K_y$ are still free.
We notice that all the matrices in~\eqref{gt:choices} are sparse, 
resulting in a controller that can be implemented in a 
fully distributed way.

The choice~\eqref{gt:choices} results in a closed-loop 
system~\eqref{ctr:s:cl} with 
\begin{align}\label{gt:d:FG}
\begin{aligned}
  F 
  &=
  \begin{bmatrix}
	  \bA + K_y C & K_z
	  \\
	  (\btA-I)C & \btA
  \end{bmatrix}, 
  & G &:= 
  \begin{bmatrix}
    K_y Q
    \\ 
    (\btA - I) Q
  \end{bmatrix}.
  \end{aligned}
\end{align}

In the following, we investigate conditions on the choice of $K_z$ and $K_y$ such 
that an \conditionedinitializationset/ $\cO$ and the controller~\eqref{ctr:s:z}
satisfy the assumptions of Proposition~\ref{prop:ctr}.
As a first result we claim the following.
\begin{lemma}\label{lemma:V}
	Consider the closed-loop system~\eqref{ctr:s:cl} in the setting described above.
	Then, 
  \begin{enumerate}
	\item there exists an  $(\nb-\dx)$-dimensional subspace $\cV$ of $\R^{\nb}$ 
	that is $F$-invariant and externally anti-stable for \emph{all the possible choices} of $K_y$ and $K_z$;
	\item there always exist $K_y$ and $K_z$ such that $\cV$ is also	internally stable.
  \end{enumerate}
\end{lemma}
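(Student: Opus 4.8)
The plan is to identify $\cV$ explicitly as the kernel of the linear functional returning the aggregate tracker, and to reduce (ii) to a one-parameter eigenvalue-perturbation argument. For (i), I would set
\[
  \cV := \{\col(x,z)\in\R^{\nb}\st \bI^\top z = 0\} = \ker\begin{bmatrix} 0 & \bI^\top \end{bmatrix},
\]
a subspace of dimension $\nb-\dx$. The key is a conservation law: reading the second block-row of $F$ in~\eqref{gt:d:FG} gives $z^+ = (\btA-I)Cx + \btA z$, and left-multiplying by $\bI^\top$ while using $\bI^\top\btA=\bI^\top$ yields $\bI^\top z^+ = \bI^\top z$ for every $\col(x,z)$. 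Since $K_y$ and $K_z$ enter only the $x^+$ block-row, this holds for \emph{all} choices of $K_y,K_z$, which proves $F$-invariance of $\cV$ uniformly in the free gains. Moreover $F\xi-\xi\in\cV$ for every $\xi\in\R^{\nb}$ (its $z$-part is annihilated by $\bI^\top$), so $F$ acts as the identity on $\R^{\nb}/\cV$; writing $F=I+N$ with $\img N\subseteq\cV$ and using $T_2^\top T_1=0$ gives $F'_E = T_2^\top F T_2 = I_{\dx}$, whose spectrum $\{1\}$ lies outside the open unit disc. Hence $\cV$ is externally anti-stable, which settles (i).

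For (ii), since the external modes are pinned at $1$ irrespective of the feedback, it suffices to exhibit a single pair $(K_y,K_z)$ making the internal block $F'_I$ Schur. I would take the genuine gradient-tracking choice $K_y=K_z=-\gamma I$ (which recovers~\eqref{eq:gradient_tracking_xz} after eliminating $y$) and treat $\gamma\ge0$ as a perturbation parameter. At $\gamma=0$ the matrix $F$ is block lower-triangular, so $\sigma(F)=\sigma(\bA)\cup\sigma(\btA)$; the self-loops $i\in\nbrs_i$ render $A$ and $\tA$ primitive, so the eigenvalue $1$ is semisimple of multiplicity $2\dx$ and all other eigenvalues lie strictly inside the disc. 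Of these $2\dx$ unit eigenvalues, $\dx$ belong to the external block already treated in (i) and $\dx$ belong to $F'_I$; the remaining eigenvalues of $F'_I$ are already Schur at $\gamma=0$.

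The crux, and the step I expect to be the main obstacle, is to show that the $\dx$ critical eigenvalues of $F'_I$ resting at $1$ move strictly inside the disc for small $\gamma>0$. I would compute the eigenspaces of $F$ at $\gamma=0$, obtaining right eigenvectors $\col(\1_N\kron v,\ \pi\kron w - C(\1_N\kron v))$ and left eigenvectors $\col(q\kron\alpha,\ \1_N\kron\beta)$, with $\pi$ and $q$ the Perron vectors of $\tA$ and $A^\top$. The modes contained in $\cV$ are singled out by $w=\bar C v$ with $\bar C:=\sum_{i=1}^N C_i$, while the dual left vectors annihilating the transverse directions are those with $\beta=0$. A first-order perturbation then reduces the motion of the critical eigenvalues to the $\dx\times\dx$ matrix $-\kappa\,\bar C$, where $\kappa:=q^\top\pi>0$; since $\bar C\succ0$ its eigenvalues are real and positive, so the critical eigenvalues behave as $1-\gamma\kappa\rho+o(\gamma)$, one for each eigenvalue $\rho>0$ of $\bar C$, and enter the open unit disc. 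Continuity of the spectrum keeps the remaining eigenvalues inside, so $F'_I$ is Schur for all sufficiently small $\gamma>0$, which proves (ii). The delicate points are the eigenvector bookkeeping and the verification of semisimplicity; the favourable sign of the reduced matrix, guaranteed by $\bar C\succ0$ and the positivity of the Perron vectors, is exactly what makes the argument close.
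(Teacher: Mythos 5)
Your proof is correct, but it takes a genuinely different route from the paper's on both points, so it is worth comparing them. For part \emph{(i)}, the paper never names $\cV$ explicitly: it writes $F = F_0 + B_0\begin{bmatrix} K_y C & K_z\end{bmatrix}$ and applies the PBH test to the auxiliary pair $(F_0,B_0)$, taking $\cV$ to be its reachable subspace, whose $F$-invariance and external spectrum cannot be altered by feedback. Your conservation law $\bI^\top z^+ = \bI^\top z$ identifies the same subspace concretely (the paper's Kalman basis $T_1$, $T_2$ in~\eqref{gt:T1_T2} shows its reachable subspace is exactly $\{\col(x,z) \st \bI^\top z = 0\}$), and your observation that $F - I$ maps into $\cV$ gives the sharper conclusion $F'_E = I_\dx$ exactly, not merely anti-stability. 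For part \emph{(ii)}, the paper stays purely existential: in the Kalman coordinates the pair $(F_I',B_I')$ of~\eqref{d:Fi_Bi} is completely reachable and $C$ is nonsingular, so pole placement yields \emph{some} stabilizing $(K_y,K_z)$ --- gains which, as the paper concedes after Proposition~\ref{prop:gt}, need not respect the network sparsity. You instead stabilize with the specific distributed choice $K_y = K_z = -\gamma I$ via first-order perturbation of the semisimple eigenvalue $1$, obtaining the reduced matrix $-(q^\top \pi)\bar C$ with $\bar C \succ 0$; this is heavier machinery but proves strictly more, essentially establishing the paper's Proposition~\ref{prop:gamma} (stated there without proof) as a by-product. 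One caution on your part \emph{(ii)}: semisimplicity of the eigenvalue $1$ of $F_0$ does \emph{not} follow from primitivity of $A$ and $\tA$ alone, because the two diagonal blocks of the triangular matrix $F_0$ share that eigenvalue and the coupling block $(\btA - I)C$ could a priori create Jordan chains; what actually rules this out is your explicit eigenvector computation showing the geometric multiplicity is $2\dx$, so that computation (together with the Perron normalizations $\1^\top \pi = q^\top \1 = 1$ needed for the pairing to be the identity) must be carried out in full, as you yourself flag.
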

\begin{proof}
We first notice that $F$ in~\eqref{gt:d:FG} can be decomposed 
in two terms as
\begin{equation}\label{gt:e:F_F0}
  F = 
  F_0  + B_0
  \begin{bmatrix}
    K_yC & K_z
  \end{bmatrix},
\end{equation}
where
\begin{equation*}
\begin{aligned}
F_0 &:= \begin{bmatrix}
\bA  & 0\\
(\btA - I) C  & \btA
\end{bmatrix}, 
& 
B_0 &:= 
\begin{bmatrix}
I\\ 0
\end{bmatrix}.
\end{aligned}
\end{equation*}
As a consequence $F$ can be thought of as being obtained by stabilizing the
following auxiliary system
\begin{equation*}
\xb_0^+ = F_0 \xb_0 + B_0\ub_0
\end{equation*}
by means of the state-feedback control law
$\ub_0 :=  
\begin{bmatrix}
K_yC & K_z
\end{bmatrix}
\xb_0$.
Being $F_0$ triangular, it holds that $\sigma(F_0)=\sigma(\bA)\cup\sigma(\btA)$. Hence,
$F_0$ has an eigenvalue equal to $1$ with algebraic multiplicity $2\dx$, while all 
the other eigenvalues lie inside the open unitary disc. It can be shown that a basis for the left-eigenspace of $F_0$ associated 
to the eigenvalue $1$ is given by the span of $v_1^\top$ and $v_2^\top$  defined as
\begin{equation}\label{gt:v1_v2}
\begin{aligned}
  v_1^\top &:= 
  \begin{bmatrix}
  v_{11}^\top&0
  \end{bmatrix}, & v_2^\top &:= 
  \begin{bmatrix}
    0 & \bI^\top  
  \end{bmatrix},
\end{aligned}
\end{equation}
where $v_{11}$ satisfies $v_{11}^\top \bA = v_{11}^\top$.
We further observe that the left-kernel of $\begin{bmatrix}
F_0 - I & B_0
\end{bmatrix}$ is spanned only by $v_2^\top$. Therefore, the stabilizability 
PBH test 
ensures that the non-reachable subspace of $(F_0,B_0)$ is a 
$\dx$-dimensional subspace of the eigenspace associated to the eigenvalue 
$1$  and, on the other hand, that the reachable subspace has dimension 
$n_v=\nb-\dx$. Therefore, point \emph{(i)} follows by taking $\cV$ equal to 
the reachable subspace, and by noticing that its $F$-invariance and external 
anti-stability properties cannot be changed via feedback, i.e., by any choice
for $K_z$ and $K_y$. 

To show point \emph{(ii)}, we resort to the reachability Kalman decomposition. 
Consider a transformation matrix $T := 
\begin{bmatrix}
T_1 & T_2
\end{bmatrix}$ with
\begin{align}\label{gt:T1_T2}
  \begin{aligned}
      T_1 & :=  
  \begin{bmatrix}
  I & 0 \\0 & R 
  \end{bmatrix}, 
    &   T_2 & := 
  \dfrac{v_2}{\sqrt{N}}
  =
  \dfrac{1}{\sqrt{N}}
    \begin{bmatrix}
    0 \\ \bI
  \end{bmatrix},
  \end{aligned}
\end{align}
where $R\in\R^{N\dx \x (N-1)\dx}$ is such that $RR^\top = I$ and $R^\top \bI=0$.
Then, it holds $T^{-1}=T^\top$ and $T$ transforms $(F_0,B_0)$ into 
$F_0':=T^\top F_0 T$ and $B_0':=T^\top B_0$ of the form
\begin{align*}
\begin{aligned}
F_0' & = 
\left[
\begin{array}{c c |c}
\bA & 0 & F_{J1} \\
-R^\top ( I - \btA)C & R^\top \btA R & F_{J2} \\
\hline
0 & 0 & I
\end{array}
\right]\!,
& \!\!
B_0' & = 
\left[
\begin{array}{c}
I\\0\\ \hline 0
\end{array}
\right ],
\end{aligned}
\end{align*}
for some $F_{J1}$ and $F_{J2}$.
Furthermore by construction the pair
\begin{align}  \label{d:Fi_Bi}
  (F_I',B_I'):=\left( 
  \begin{bmatrix}
  \bA & 0  
  \\
  -R^\top (I-\btA)C & R^\top \btA R   
  \end{bmatrix},\;  
  \begin{bmatrix}
  I\\0 
  \end{bmatrix}\right) 
\end{align}
is completely reachable, and being $C$ nonsingular, then there always exist gain matrices $K_y$ and $K_z$ 
satisfying 
$\begin{bmatrix}
K_y C & K_z
\end{bmatrix}T   =\begin{bmatrix}
K_{yI}' & K_{zI}' & K_J'
\end{bmatrix}$  
such that the matrix  $F_I' + B_I'\begin{bmatrix}
K_{yI}'  & K_{zI}'
\end{bmatrix}$ is Schur. Thus, point \emph{(ii)} follows since the latter condition 
implies that $\cV$ is internally stable.
\end{proof}

In the rest of the section, we denote by $\cV$ the subspace produced by
Lemma~\ref{lemma:V}.  The following result gives a sufficient condition on the
choice of $K_y$ and $K_z$ such that equations~\eqref{e:regeq} in
Proposition~\ref{prop:ctr} admit a solution.
\begin{lemma}\label{lemma:Pi}
	Consider the closed-loop system~\eqref{ctr:s:cl} in the setting described above.
  Pick $K_z$ and $K_y$ such that
	\begin{equation}\label{gt:Ky_eq_Kz}
	  K_z = K_y.
	\end{equation} 
	Then, there exist $\Pi\in\R^{\nb\x p}$ and $P\in\R^{\nb\x p}$, satisfying $\img P\subset\cV^\perp$, such that equations \eqref{e:regeq} hold.
\end{lemma}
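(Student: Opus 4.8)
The plan is to solve the three regulator equations \eqref{e:regeq} constructively, exploiting the freedom left in $\Pi$ and $P$. Write $\Pi=\col(\Pi_x,\Pi_z)$ with $\Pi_x,\Pi_z\in\R^{N\dx\x p}$, and recall from \eqref{gt:T1_T2} that $\cV^\perp=\img T_2=\img\col(0,\bI)$, so the constraint $\img P\subset\cV^\perp$ forces $P=\col(0,\bI M)$ for some $M\in\R^{\dx\x p}$. I would first dispatch \eqref{regeq:T2}: since $\bI^\top\bI=(\1_N^\top\1_N)\kron I_\dx=N I_\dx$, one computes $T_2^\top\Pi=\tfrac{1}{\sqrt N}\bI^\top\Pi_z$ and $T_2^\top P=\sqrt N\,M$, so \eqref{regeq:T2} holds precisely for $M=\tfrac1N\bI^\top\Pi_z$. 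Hence, for \emph{any} $\Pi$ solving \eqref{regeq:Pi}--\eqref{regeq:Psi}, equation \eqref{regeq:T2} can always be met by this choice of $P$, and the image condition $\img P\subset\cV^\perp$ is automatic from the form $P=\col(0,\bI M)$. The real content is therefore the existence of $\Pi$ satisfying the first two equations.

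Next I would impose \eqref{regeq:Psi}, i.e. fix the top block $\Pi_x=\bI\Sigma$, and substitute into the two block rows of \eqref{regeq:Pi} using the structure \eqref{gt:d:FG} together with the hypothesis $K_z=K_y=:K$. Introducing the auxiliary quantity $w:=C\bI\Sigma+\Pi_z+Q$, the top row, after using $\bA\bI=\bI$ so that $\bA\Pi_x=\Pi_x$, collapses to $Kw=0$; the bottom row, after moving $\btA\Pi_z$ to the left and factoring, becomes $(I-\btA)w=0$. It is exactly the constraint $K_z=K_y$ that lets me bundle the $K_y(C\Pi_x+Q)$ and $K_z\Pi_z$ contributions into the single combination $Kw$, so that both reduced equations involve only $w$.

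The key observation — and essentially the only idea in the proof — is that both reduced conditions are satisfied simultaneously by the single choice $w=0$, that is $\Pi_z:=-(C\bI\Sigma+Q)$, for \emph{every} gain $K$. This has a transparent interpretation: at the regulated equilibrium $(\xeq,\zeq)=\Pi\theta_0$ one gets $\xeq=\bI\ts$ and $\zeq=-(C\bI\Sigma+Q)\theta_0$, which is exactly $-1$ times the equilibrium output, consistent with the tracker vanishing ($s_i^\infty=0$) noted in Remark~\ref{rmk:gt_init}. Taking $\Pi=\col\!\big(\bI\Sigma,\,-(C\bI\Sigma+Q)\big)$ and $P=\col\!\big(0,\tfrac1N\bI\bI^\top\Pi_z\big)$ then verifies all three equations in \eqref{e:regeq}. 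I do not anticipate a genuine obstacle here: once $w$ is identified the verification is a short linear-algebra check, and the only point worth stating carefully is that the regulator equations are solvable irrespective of $K$, so the choice of $K$ from Lemma~\ref{lemma:V} that makes $\cV$ internally stable remains available.
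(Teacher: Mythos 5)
Your proposal is correct and follows essentially the same route as the paper's proof: fix $\Pi_x=\bI\Sigma$, use $K_z=K_y$ together with $\bA\bI=\bI$ to collapse the two block rows of \eqref{regeq:Pi} into conditions on $\Pi_z+C\bI\Sigma+Q$, solve them by $\Pi_z=-(C\bI\Sigma+Q)$, and satisfy \eqref{regeq:T2} by projecting $\Pi$ onto $\cV^\perp$. Your explicit $P=\col\bigl(0,\tfrac1N\bI\bI^\top\Pi_z\bigr)$ is exactly the paper's choice $P=T_2(T_2^\top T_2)^{-1}T_2^\top\Pi$ written out in coordinates, so the two arguments coincide.
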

\begin{proof}
Let $\Pi = \col(\Pi_x,\Pi_z)$, with 
$\Pi_x\in\R^{N\dx\x p}$ and $\Pi_z\in\R^{N\dx\x p}$. Then, 
in view of~\eqref{gt:d:FG}, $\Pi$ solves equations~\eqref{regeq:Pi} 
and~\eqref{regeq:Psi} if and only if
\begin{align}\label{gt:Pis_1}
\begin{array}{lcl}
  \Pi_x &=& (\bA+K_y C) \Pi_x + K_z \Pi_z + K_y Q 
  \\
  \Pi_z &=& (\btA-I)C \Pi_x + \btA\Pi_z + (\btA-I)Q 
  \\
  \bI \Sigma &=& \Pi_x .
\end{array}
\end{align}
By~\eqref{gt:Ky_eq_Kz} and since $(\bA-I) \bI =0$, 
we can rewrite~\eqref{gt:Pis_1} as
\begin{align}\label{gt:PIz} 
\begin{array}{lcl}
0 &=&   K_y (\Pi_z+C \bI \Sigma +Q)\\
0 &=& (\btA-I)( \Pi_z + C\bI \Sigma + Q ) \\
\Pi_x &=& \bI \Sigma.
\end{array}
\end{align}
Therefore, 
$\Pi_x=\bI \Sigma$ and $\Pi_z =-C\bI \Sigma -Q$ solve~\eqref{regeq:Pi}-\eqref{regeq:Psi}.

Finally, as for the existence of $P$ satisfying~\eqref{regeq:T2}, we observe that 
$T_2^\top$ is full rank and $\img T_2=\cV^\perp$. Hence, given $\Pi$, equation~\eqref{regeq:T2} is satisfied,
e.g., with $P=T_2 (T_2^\top T_2)^{-1} T_2^\top \Pi $ that fulfills $\img P\subset \img T_2=\cV^\perp$.
\end{proof}

\begin{remark}\label{rmk:Lemmas}
  While Lemma~\ref{lemma:V} is linked only to a \emph{stability} requirement on 
  the closed-loop system, the choice \eqref{gt:Ky_eq_Kz} of Lemma~\ref{lemma:Pi} 
  represents a constraint ensuring the \emph{existence of an equilibrium which is an 
  optimal solution for the optimization problem~\eqref{eq:problem}}. 
In fact, in order to obtain internal stability of $\cV$ we could, for example,
choose $K_z=0$ and $K_y$ any matrix so that $\bA+K_y C$ is Schur (which always
exists).  However, such a choice does not satisfy~\eqref{gt:Ky_eq_Kz} and,
hence, the resulting algorithm would not ensure the existence of an optimal
equilibrium for the closed-loop system. \oprocend
\end{remark}

In the following proposition we merge the previous results to give sufficient conditions
on the choice of $K_z$ and $K_y$ so that the trajectories of~\eqref{ctr:s:cl} initialized 
in $\cO = P\theta_0 + \cV$ converge to a solution of Problem~\ref{prob:ctr}.
\begin{proposition}\label{prop:gt}
	Consider the closed-loop system~\eqref{ctr:s:cl} in the setting described above.
	Let $K_z = K_y$ be such that $F$ in~\eqref{gt:d:FG}
	has all the eigenvalues but $\dx$ inside the open unitary disc.
	Then, Problem~\ref{prob:ctr} is solved from $\cO$,
  in the sense that all the trajectories of the closed-loop system \eqref{ctr:s:cl} 
  originating in $\cO$ are bounded and
   $ \lim_{t\to\infty} x_i(t) \!=\! \ts$, $\forall \, i=\toN$.\oprocend
\end{proposition}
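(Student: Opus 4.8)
The plan is to verify that the hypotheses of Proposition~\ref{prop:gt} imply all the conditions required by Proposition~\ref{prop:ctr}, so that the conclusion follows directly by invoking the latter. Concretely, I would check that the choice $K_z=K_y$ together with the spectral assumption on $F$ furnishes (a) an \conditionedinitializationset/ $\cO$ built on an $F$-invariant, externally anti-stable, and internally stable subspace $\cV$, and (b) a matrix $\Pi$ solving the regulator equations~\eqref{e:regeq}. Both ingredients have essentially been prepared by Lemma~\ref{lemma:V} and Lemma~\ref{lemma:Pi}, so the proof is mostly a matter of assembling them and checking that the eigenvalue hypothesis of Proposition~\ref{prop:gt} matches internal stability.

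First I would recall that Lemma~\ref{lemma:V}, point \emph{(i)}, already produces an $(\nb-\dx)$-dimensional subspace $\cV$ that is $F$-invariant and externally anti-stable for \emph{every} choice of $K_y$ and $K_z$; this subspace is the reachable subspace of $(F_0,B_0)$, and it carries the $\dx$ uncontrollable modes of $F$ in its complement $\cV^\perp$ (the eigenvalue $1$ direction spanned by $v_2$). Next, I would translate the hypothesis ``$F$ has all eigenvalues but $\dx$ inside the open unitary disc'' into internal stability of $\cV$: since $F$ acts on $\cV^\perp$ via the fixed block $F_E'$ whose spectrum consists of the uncontrollable eigenvalues (here equal to $1$, and hence $\dx$ of them lie outside the open disc), the remaining $\nb-\dx$ eigenvalues are exactly those of the restriction $F_I'$ to $\cV$. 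The assumption forces these to be Schur, which is precisely internal stability of $\cV$. This establishes condition \emph{(i)} of Proposition~\ref{prop:ctr}.

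For condition \emph{(ii)}, I would invoke Lemma~\ref{lemma:Pi}: the choice $K_z=K_y$ guarantees the existence of $\Pi=\col(\bI\Sigma,\,-C\bI\Sigma-Q)$ solving~\eqref{regeq:Pi} and~\eqref{regeq:Psi}, and of a matrix $P$ with $\img P\subset\cV^\perp$ satisfying~\eqref{regeq:T2}. With this $P$ we define $\cO:=P\theta_0+\cV$, which is an \conditionedinitializationset/ in the sense of Definition~\ref{def:conditioned} because $\cV$ is $F$-invariant and externally anti-stable. At this point every hypothesis of Proposition~\ref{prop:ctr} is in force---internal stability of $\cV$, the solvability of the regulator equations, and the structural requirement on $\cO$---so Proposition~\ref{prop:ctr} yields that Problem~\ref{prob:ctr} is solved from $\cO$, i.e.\ the closed-loop trajectories originating in $\cO$ are bounded and $\lim_{t\to\infty}x_i(t)=\ts$ for all $i$.

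The only genuinely non-routine step is the spectral bookkeeping in the second paragraph: one must argue carefully that the $\dx$ eigenvalues of $F$ lying outside the open unit disc are exactly the uncontrollable modes pinned to $\cV^\perp$, so that the hypothesis on $F$ transfers cleanly to $F_I'$ being Schur. This relies on the fact, established in Lemma~\ref{lemma:V}, that external anti-stability of $\cV$ (equivalently, the spectrum of $F_E'$) is fixed at the value $1$ with multiplicity $\dx$ independently of the feedback gains, so the hypothesis cannot be ``spent'' on the external modes and must stabilize the internal ones. I expect this matching argument to be the main obstacle; everything else is a direct citation of the two preceding lemmas.
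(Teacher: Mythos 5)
Your proposal is correct and follows essentially the same route as the paper's own proof: invoke Lemma~\ref{lemma:V} for the $F$-invariant, externally anti-stable subspace $\cV$, identify the eigenvalue hypothesis on $F$ with internal stability of $\cV$, invoke Lemma~\ref{lemma:Pi} (using $K_z=K_y$) for the solvability of~\eqref{e:regeq} and the existence of $P$, and conclude via Proposition~\ref{prop:ctr}. The only difference is that you spell out the spectral bookkeeping---that the $\dx$ non-Schur eigenvalues must be the fixed uncontrollable modes on $\cV^\perp$, so the hypothesis forces $F_I'$ to be Schur---which the paper asserts in a single sentence without elaboration.
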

\begin{proof}
  In view of Lemma~\ref{lemma:V}, there exists an $F$-invariant and externally
  anti-stable subspace $\cV$. Moreover, it is also internally stable whenever
  $K_z$ and $K_y$ are such that $F$ has all the eigenvalues but $\dx$ inside the
  open unitary disc.  In view of Lemma~\ref{lemma:Pi}, if $K_z = K_y$, then
  there exist $\Pi$ and $P$ such that steady-state condition~\eqref{e:regeq}
  hold.  Hence, the claim follows by Proposition~\ref{prop:ctr}.
\end{proof}

Finally, we notice that the choice of $K_z$ and $K_y$ in Proposition \ref{prop:gt} might not satisfy the network constraints. In the following, we discuss how the usual practice in distributed optimization
of selecting a common stepsize $\gamma>0$ for all the agents, is consistent 
with Proposition~\ref{prop:gt}, provided that 
$\gamma$ is taken sufficiently small. In our framework, this is achieved by setting $K_y = K_z = -\gamma I$.
In this way we complete the result of the section by showing that $K_z$ and $K_y$, 
fulfilling both the assumptions of Proposition \ref{prop:gt} and the network constraints, 
always exist.
The feasibility of this choices follows as a particular case of the following result.

\begin{proposition}\label{prop:gamma}
  Consider the closed-loop system~\eqref{ctr:s:cl} in the setting described
  above and let $K_z = K_y = -\Lambda$, with $\Lambda\in\R^{{N \dx} \x {N \dx}}$ diagonal and 
  positive definite.  Then, there exists $\gamma^\star>0$ such that, if 
  all the eigenvalues of $\Lambda$ lie in $(0,\gamma^\star)$, Problem~\ref{prob:ctr}
  is solved from $\cO$, in the sense that all the trajectories of the
  closed-loop system~\eqref{ctr:s:cl} originating in $\cO$ are bounded and
  $
    \lim_{t\to\infty} x_i(t) = \ts$, 
  $\forall \, i=\toN$.\oprocend
\end{proposition}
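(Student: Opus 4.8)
The plan is to reduce Proposition~\ref{prop:gamma} to Proposition~\ref{prop:gt}, whose hypotheses are purely spectral. Since the choice $K_z=K_y=-\Lambda$ already satisfies the constraint~\eqref{gt:Ky_eq_Kz} required by Lemma~\ref{lemma:Pi}, the only thing left to verify is the stability condition of Proposition~\ref{prop:gt}: namely that $F$ in~\eqref{gt:d:FG} has exactly $\dx$ eigenvalues on the unit circle (equal to $1$, as dictated by the non-reachable subspace identified in Lemma~\ref{lemma:V}) and all the remaining $\nb-\dx$ eigenvalues strictly inside the open unit disc, for all $\Lambda$ sufficiently small. By Lemma~\ref{lemma:V}, the $\dx$ uncontrollable modes at $1$ are fixed by feedback, so the entire burden is to show that the \emph{controllable} part, governed by $F_I'+B_I'\begin{bmatrix} K_{yI}' & K_{zI}'\end{bmatrix}$, becomes Schur when $\Lambda$ has small enough spectrum.

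First I would substitute $K_y=K_z=-\Lambda$ into~\eqref{gt:d:FG} and write $F=F_0-\Lambda B_0\begin{bmatrix} C & I\end{bmatrix}$ (using the block form of $F_0$ and $B_0$ from the proof of Lemma~\ref{lemma:V}), exhibiting $F$ explicitly as a perturbation of the marginally stable $F_0$ whose eigenvalues are $1$ (multiplicity $2\dx$) together with the spectra of $\bA$ and $\btA$ restricted off the consensus direction, all strictly inside the disc. The natural tool here is a continuity/perturbation argument on eigenvalues. I would treat $\Lambda$ as a small parameter: as $\Lambda\to 0$, the eigenvalues of $F$ converge to those of $F_0$. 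The $\nb-2\dx$ eigenvalues of $F_0$ already strictly inside the disc stay inside by continuity for $\Lambda$ small. The delicate modes are the $2\dx$ eigenvalues of $F_0$ sitting at $1$: by Lemma~\ref{lemma:V}, $\dx$ of these are uncontrollable and remain pinned at $1$, while the other $\dx$ are controllable and must be pushed strictly inside.

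The hard part will be controlling the direction in which the $\dx$ controllable eigenvalues at $1$ move as $\Lambda$ is switched on. I would make this rigorous with a first-order eigenvalue-perturbation (sensitivity) computation: for each simple or semisimple eigenvalue at $1$ with right/left eigenvector pair $(w,v)$ of $F_0$, the first-order displacement is $\dfrac{v^\top(\delta F)w}{v^\top w}$ with $\delta F=-\Lambda B_0\begin{bmatrix} C & I\end{bmatrix}$. Using the left-eigenvector structure $v_1,v_2$ from~\eqref{gt:v1_v2} and the complementary right eigenvectors, the uncontrollable mode (left eigenvector $v_2^\top=\begin{bmatrix}0 & \bI^\top\end{bmatrix}$, which annihilates $B_0$) yields zero displacement, confirming it stays at $1$, whereas the controllable mode yields a displacement whose real part is strictly negative because $C\succ0$ and $\Lambda\succ0$ enter through a quadratic form $v^\top \Lambda B_0 C\,(\cdot)$ that is sign-definite. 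This sign-definiteness is exactly what forces the controllable eigenvalue to enter the open disc to first order in $\Lambda$. Since the displacement is continuous and strictly inward, a threshold $\gamma^\star>0$ exists such that for all $\Lambda$ with spectrum in $(0,\gamma^\star)$ the eigenvalue stays strictly inside; taking $\gamma^\star$ small enough to simultaneously keep the already-stable modes inside completes the spectral verification. With the stability condition of Proposition~\ref{prop:gt} established and~\eqref{gt:Ky_eq_Kz} satisfied by construction, the conclusion---boundedness of trajectories from $\cO$ and $\lim_{t\to\infty}x_i(t)=\ts$ for all $i=\toN$---follows directly from Proposition~\ref{prop:gt}.
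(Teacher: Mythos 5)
A preliminary remark: the paper states Proposition~\ref{prop:gamma} \emph{without proof} (it is offered as a particular case consistent with Proposition~\ref{prop:gt}), so there is no official argument to compare against, and I assess your proposal on its own merits. Your frame is the natural one, and its first half is sound: $K_z=K_y=-\Lambda$ satisfies \eqref{gt:Ky_eq_Kz} so Lemma~\ref{lemma:Pi} applies; the $\dx$ unreachable eigenvalues remain exactly at $1$ under any feedback because $v_2^\top B_0=0$; and the $\nb-2\dx$ eigenvalues of $F_0$ strictly inside the disc stay inside for $\|\Lambda\|$ small. The first soft spot is the perturbation formula itself: $1$ is a \emph{semisimple eigenvalue of $F_0$ of multiplicity $2\dx$}, so the simple-eigenvalue displacement $v^\top(\delta F)w/(v^\top w)$ cannot be applied pair-by-pair; one must take bases $W,V$ of the right and left eigenspaces and compute the spectrum of $(V^\top W)^{-1}V^\top(\delta F)W$. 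This is fixable and vindicates your claim: with right eigenvectors $\col(\bI c,-C\bI c)$ and $\col(0,(\mu\kron I_\dx)c')$, where $\tA\mu=\mu$, $\1^\top\mu=1$, and the left eigenvectors of \eqref{gt:v1_v2} with $v_{11}=\pi\kron I_\dx$, $\pi^\top A=\pi^\top$, the projected matrix is block triangular with blocks $0$ (the pinned modes) and $-\bigl(\sum_i C_i\bigr)\bigl(\sum_i\pi_i\mu_i\Lambda_i\bigr)$, whose eigenvalues are real and strictly negative. Note that this uses strict positivity of the Perron entries $\pi_i,\mu_i$ (connectedness plus self-loops), not merely $C\succ0$ and $\Lambda\succ0$ as you suggest.

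The genuine gap is the final step. The proposition asserts a single $\gamma^\star$ valid for \emph{every} diagonal positive definite $\Lambda$ with spectrum in $(0,\gamma^\star)$, whereas your expansion is along a ray $\Lambda=\epsilon\Lambda_0$ and yields a threshold depending on the direction $\Lambda_0$; ``continuity of the displacement'' does not make this uniform, because the first-order inward displacement, governed by $\lambda_{\min}\bigl((\sum_i C_i)(\sum_i\pi_i\mu_i\Lambda_i)\bigr)$, is not bounded below by a fixed multiple of $\|\Lambda\|$ over all admissible $\Lambda$. Concretely, if every agent uses stepsize $\gamma^2$ in one coordinate and $\gamma$ in the others, this displacement is $\Theta(\gamma^2)$ --- the same order as the remainder your expansion discards --- so the first-order argument is inconclusive there, and the infimum of your direction-wise thresholds could be $0$ as far as the argument shows. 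As written, your proof covers the common-stepsize case $K_y=K_z=-\gamma I$ emphasized in the surrounding text (and any family with bounded heterogeneity of the diagonal entries), but not the full statement. To close it you need a uniform argument on the critical spectral subspace: for instance, show that the restriction of $F(\Lambda)$ to the invariant subspace of eigenvalues near $1$, after factoring out the $\dx$ pinned modes, takes the form $I-\Delta(H-G(\Lambda))$ with $\Delta:=\sum_i\pi_i\mu_i\Lambda_i$, $H:=\sum_i C_i$ and $\|G(\Lambda)\|=O(\|\Lambda\|)$, and then verify directly (conjugating by $\Delta^{1/2}$ and estimating $\|(I-\Delta^{1/2}(H-G)\Delta^{1/2})v\|^2\le\|v\|^2-(\lambda_{\min}(H)-O(\|\Lambda\|))\|\Delta^{1/2}v\|^2$) that all such matrices are Schur uniformly over small diagonal $\Delta\succ0$.
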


\subsection{Remarks About the Proposed Approach}
\label{sec:alg_remarks}
We start by pointing out some aspects related to the initialization.
We observe that, if $K_y$ is nonsingular, then the choice of $\Pi_z$
satisfying \eqref{gt:PIz} is unique and it is given by $\Pi_z = - (C \bI \Sigma+Q)$.
Thus, recalling the definition of $\Sigma$ in~\eqref{d:Sigma}, it holds
$\bI^\top  \Pi_z = - \bI^\top (C \bI \Sigma +Q)= 0$.
Moreover, we have shown that we can set the matrix $T_2^\top = v_2^\top = \begin{bmatrix}
  0 & \bI^\top 
\end{bmatrix}$.
Thus, equation~\eqref{regeq:T2} leads to 
$\begin{bmatrix}
  0 & \bI^\top 
\end{bmatrix} P = \bI^\top \Pi_z = 0$.
This means that the \conditionedinitializationset/ $\cO$ 
  coincides with $\cV$, 
  i.e., the distributed algorithm works \emph{only if} initialized in the 
  reachable subspace $\cV$, which means that $(x(0),z(0))$ has
  to be chosen so that
    $\sum_{i=1}^N z_i(0) = 0$.
  This, in turn, is consistent with Remark~\ref{rmk:gt_init}.
  However, we point out that $\sum_{i=1}^N z_i(0) = 0$ is necessary 
  only if $\ker K_y\cap\ker (\btA - I) = \{0\}$, as otherwise different
  choices of $\Pi_z$ might be possible.

  We underline that the only parameters of the problem
  \eqref{eq:problem}-\eqref{d:fi} that need to be known for the design of the
  gains $K_y$ and $K_z$ (fulfilling the stability requirement of Proposition
  \ref{prop:gt}) are the matrices $C_i$.
  Nevertheless, due to the continuity of the eigenvalues of the closed-loop
  matrix $F$ with respect to variations in $K_y$ and $K_z$, we also observe that
  whenever internal stability of $\cV$ is ensured for a ``nominal'' value
  $C_i^\circ$ of $C_i$, it also holds for all the actual values of $C_i$ in a
  sufficiently small open neighborhood of $C_i^\circ$.

  More in general, well-known results in the context of (hybrid) dynamical
  systems (see, e.g., \cite[Proposition 6.34]{Goebel2012}), show that any
  algorithm of the form \eqref{ctr:s:z} fulfilling the conditions of Proposition
  \ref{prob:ctr} is ``robust'' with respect to parameter perturbations and
  measurement noise. That is, for sufficiently small perturbations and noise,
  boundedness of the closed-loop trajectories is preserved, and the asymptotic
  error from the optimal solution is related to the noise bound.

  Finally, we underline how well-known arguments on homogeneous approximations
  of nonlinear systems (see, e.g., \cite[Theorem 9.11]{Goebel2012}) can be used
  to show that the global (in $\cO$) result of Proposition \ref{prob:ctr}
  implies a \emph{local} (in $\cO$) result when sufficiently regular
  nonlinearities comes into play. This, in turn, permits to extend the presented
  results ``locally'' to optimization problems of the form
  \eqref{eq:problem}-\eqref{d:fi} with nonlinear, strongly convex functions and 
  smooth $f_i$. 

\section{Conclusions}
\label{sec:conclusions}
In this paper we proposed a system theoretical approach to analyze a class of
gradient tracking algorithms for distributed quadratic optimization.
We formulated the design of a distributed algorithm %
as the design of a (linear) dynamic regulator solving a set-point
control problem.
We highlighted structural properties of the designed regulator
and we showed that they are fulfilled by the gradient tracking. 
Moreover, we proved how lack of reachability of the closed-loop
system imposes conditions on the initialization of distributed algorithms with
this structure.
The proposed system theoretical perspective suggests that robustness arguments,
customary in control theory, can be used to draw similar conclusions on the
optimization algorithms.
Finally, these results pave the way to more general technical tools for the
analysis of nonlinear, distributed optimization problems. \\[-3em]

\bibliographystyle{IEEEtran}
\bibliography{gradient_tracking}

\end{document}